
\documentclass[sigconf, nonacm]{acmart}

\usepackage[mathscr]{euscript}
\usepackage{xspace}
\usepackage[
	lambda,
	landau,
	operators,
	probability,
	sets,
	logic,
	complexity,
	asymptotics
]{cryptocode}
\usepackage{thmtools, thm-restate}
\usepackage{xcolor}
\usepackage{enumitem}
\usepackage{array}
\usepackage{booktabs}

\newcommand{\domain}{\mathfrak{D}}
\newcommand\sampledfrom{\mathrel{{\leftarrow}\vcenter{\hbox{\tiny\rmfamily\upshape\$}}}}

\definecolor{Gray}{RGB}{50,50,50}
\newcommand{\pcmycomment}[1]{\textcolor{Gray}{\texttt{//~#1}}}

\mathchardef\mhyphen="2D 

\newtheorem{definition}{Definition}[section]
\newtheorem{game}{Game}[section]
 
\newtheorem{theorem}{Theorem}[section]

\newtheorem{axiom}{Axiom}[]

\newcommand\vldbdoi{XX.XX/XXX.XX}
\newcommand\vldbpages{XXX-XXX}
\newcommand\vldbvolume{14}
\newcommand\vldbissue{1}
\newcommand\vldbyear{2020}
\newcommand\vldbauthors{\authors}
\newcommand\vldbtitle{\shorttitle} 
\newcommand\vldbavailabilityurl{https://github.com/jadidbourbaki/db_attacks}
\newcommand\vldbpagestyle{plain} 

\usepackage{comment}

\newif\ifshowcomments
\showcommentstrue 

  {\ifshowcomments\egroup\fi}

\begin{document}
\title{LSM Trees in Adversarial Environments}

\author{Hayder Tirmazi}
\affiliation{%
  \institution{City College of New York}
}
\email{hayder.research@gmail.com}






\begin{abstract}
The Log Structured Merge (LSM) Tree is a popular choice for key-value stores focusing on optimized write throughput while maintaining performant, production-ready read latencies. LSM stores rely on a probabilistic data structure called the Bloom Filter (BF) to optimize read performance. In this paper, we focus on adversarial workloads that lead to a sharp degradation in read performance by impacting the accuracy of BFs used within the LSM store. Our evaluation shows up to $800\%$ increase in the read latency of lookups for popular LSM stores. We define adversarial models and security definitions for LSM stores. We implement adversary resilience into two popular LSM stores, LevelDB and RocksDB. We use our implementations to demonstrate how performance degradation under adversarial workloads can be mitigated. 
\end{abstract}

\maketitle

\pagestyle{\vldbpagestyle}
\begingroup\small\noindent\raggedright\textbf{PVLDB Reference Format:}\\
\vldbauthors. \vldbtitle. PVLDB, \vldbvolume(\vldbissue): \vldbpages, \vldbyear.\\
\href{https://doi.org/\vldbdoi}{doi:\vldbdoi}
\endgroup
\begingroup
\renewcommand\thefootnote{}\footnote{\noindent
This work is licensed under the Creative Commons BY-NC-ND 4.0 International License. Visit \url{https://creativecommons.org/licenses/by-nc-nd/4.0/} to view a copy of this license. For any use beyond those covered by this license, obtain permission by emailing \href{mailto:info@vldb.org}{info@vldb.org}. Copyright is held by the owner/author(s). Publication rights licensed to the VLDB Endowment. \\
\raggedright Proceedings of the VLDB Endowment, Vol. \vldbvolume, No. \vldbissue\ %
ISSN 2150-8097. \\
\href{https://doi.org/\vldbdoi}{doi:\vldbdoi} \\
}\addtocounter{footnote}{-1}\endgroup

\ifdefempty{\vldbavailabilityurl}{}{
\vspace{.3cm}
\begingroup\small\noindent\raggedright\textbf{PVLDB Artifact Availability:}\\
The source code, data, and/or other artifacts have been made available at \url{\vldbavailabilityurl}
\endgroup
}

\section{Introduction}

A large number of modern key-value stores are based on Log-structured Merge Trees (LSM Trees)~\cite{shubham_subhadeep_2024,monkey}. LSM trees allow for write-optimized~\cite{monkey}, highly configurable~\cite{thakkar_2024} storage while being relatively simple to implement. These qualities have made them common as the storage engine for a large number of commercial and open-source database systems including LevelDB~\cite{level_db} from Google, bLSM~\cite{bLSM} and cLSM~\cite{cLSM} from Yahoo, RocksDB~\cite{rocks_db} and Cassandra~\cite{cassandra} from Meta, WiredTiger~\cite{wired_tiger} from MongoDB, Monkey~\cite{monkey} from Harvard's Data Systems Lab, Apache HBase~\cite{hbase}, ~\cite{badgerdb} from DGraph and many others~\cite{shubham_subhadeep_2024,wacky}.

LSM stores use Bloom Filters~\cite{bloom_1970,mitzenmacher_broder} to reduce unnecessary disk access. This strategy depends on Bloom Filters maintaining a low false positive rate (FPR). In normal workloads, this works well as Bloom Filters filter non-existent keys, so the LSM store does not have to look on disk~\cite{monkey}. An adversary can strategically insert keys into an LSM store that saturate its Bloom Filters, drastically raising their FPR. In this case, lookups for non-existing keys (called zero-result lookups~\cite{monkey}) require multiple disk accesses, increasing query latency by up to 800\% according to our experiments. An adversary only needs a modest number of well-chosen insertions to render Bloom Filters in LSM stores ineffective. This motivates the need for adversarially resilient mechanisms in LSM store.

\noindent We make the following contributions in this paper.
\begin{enumerate}
    \item We define rigorous adversarial models for LSM store and reason about their performance under computationally bound adversaries.
    \item We demonstrate significant performance degradation under adversarial workloads, showing an increase in lookup latency by up to 800\%. Our experiments are conducted on two popular LSM stores, LevelDB and RocksDB.
    \item We introduce a lightweight and provably secure mitigation strategy that employs keyed pseudorandom permutations (PRPs) to obfuscate key placement. 
    \item We implement our mitigation in LevelDB and RocksDB, and demonstrate that it reduces the impact of adversarial workloads while maintaining performance.
\end{enumerate}

\section{Preliminaries}\label{sec:preliminaries}

In this section, we first go over notation and rigorously define an LSM store. We then define a set of axioms that hold for a large number of popular LSM store implementations. Lastly, we use this axiomatic framework to introduce multiple aspects of LSM store design and performance include the performance of zero-result lookups and the use of auxiliary data structures.

\subsubsection*{Notation.} We review notation common in adversarial data structure literature~\cite{hayder_2025,filic_2024,filic_2022,naor_oved}. Given set $S$, we write $x \sampledfrom S$ to mean that $x$ is sampled uniformly randomly from $S$. For set $S$, we denote by $|S|$ the number of elements in $S$. The same notation is used for a list $\mathcal{L}$. We write variable assignments using $\leftarrow$. If the output is the value of a randomized algorithm, we use $\sampledfrom$ instead. For a randomized algorithm $\mathrm{A}$, we write $\text{output} \leftarrow \mathrm{A}_{r}(\text{input}_{1}, \text{input}_{2}, \cdots, \text{input}_{l})$, where $r \in \mathcal{R}$ are the random coins used by $\mathrm{A}$ and $\mathcal{R}$ is the set of possible coins. For natural number $n$, we denote the set $\{ 1, \cdots, n\}$ by $[n]$. Table~\ref{tab:notation} contains a summary of all the notation used in this paper.

\subsection{LSM stores} An LSM Tree consists of $\mathcal{N}$ levels~\cite{monkey}. $L_{i}$ indicates level $i$ in the LSM Tree. $L_{0}$ is typically an in-memory buffer, while the remaining levels $[L_{1}, \cdots, L_{\mathcal{N}}]$ consist of data in secondary storage including SSDs, Hard Disks, and external storage nodes connected over the network. We formalize the general syntax and behavior of an LSM store. Given an LSM store, $\Lambda$, we denote the set of public parameters of an LSM store by $\Phi$. The public parameters contain all the knobs relevant to the LSM store implementation including the number of levels in the LSM Tree as well as the knobs of any auxiliary data structures (discussed below) used by the LSM store. We denote the elements stored in $\Lambda$ by $\mathcal{L}_{\Lambda}$ (a list). We denote the state of $\Lambda$ by $\sigma \in \Sigma$ where $\Sigma$ is the space of all possible states of $\Lambda$. $\Lambda$ can store elements from any finite domain $\domain$, where $\domain = \cup_{l = 0}^{L}\{0, 1\}^{l}$ for any natural number $L \in \mathbf{N}$. An LSM store, $\Lambda$ consists of three algorithms.\vspace{0.2em}

\noindent \textbf{Construction}. $\sigma \sampledfrom C_{r}(\Phi)$ sets up the initial state of an empty LSM store with public parameters $\Phi$.\vspace{0.2em}

\noindent \textbf{Insertion} $\sigma^{\prime} \sampledfrom {I}_{r}((k, v), \sigma)$, given a tuple $(k, v) \in \domain \times (\domain \cup \{\bot\})$, returns the state $\sigma^{\prime}$ after insertion. After insertion, the key $k$ is an element of $\mathcal{L}_{\Lambda}$. The value $\bot$, referred to as a tombstone~\cite{monkey}, is used to indicate a deletion. Deletions have the same control flow as insertions in LSM stores~\cite{monkey}. \vspace{0.2em}

\noindent \textbf{Query}. $v \leftarrow Q(k, \sigma)$, given an element $x \in \domain$ returns a value $v \in \domain \cup \{ \bot \}$. The return value is either the value inserted by $I_{r}$ for key $k$, or $\bot$ if no such key was inserted. In the case of a deletion, a $\bot$ is returned as the tuple $(k, \bot)$ was inserted by $I_{r}$.\vspace{0.2em}

\noindent Note that, unlike $C_{r}$ and $I_{r}$, the query algorithm $Q$ is not allowed to change the value of the state. While $C_{r}$ and $I_{r}$ are allowed to use random coins, $Q$ is deterministic. A class of LSM stores can be uniquely identified by its algorithms: $\Lambda = (C_{r}, I_{r}, Q)$. We also assume all three algorithms always succeed and their outputs are correct with probability $1$.

\begin{table}
    \centering
    \renewcommand{\arraystretch}{1.2}
    \setlength{\tabcolsep}{4pt}
    \small
    \begin{tabular}{|c|l|}
        \hline
        \textbf{Symbol} & \textbf{Description} \\ 
        \hline
        \multicolumn{2}{|c|}{\textbf{General Notation}} \\ 
        \hline
        $\domain$ & Finite domain of elements \\ 
        \hline
        $S$ & A set of elements \\ 
        \hline
        $x \sampledfrom S$ & $x$ sampled randomly from $S$ \\ 
        \hline
        $|S|$ & Cardinality of set $S$ \\ 
        \hline
        $[n]$ & Set $\{1, \dots, n\}$ \\ 
        \hline
        \multicolumn{2}{|c|}{\textbf{LSM Store Notation}} \\ 
        \hline
        $\sigma \in \Sigma$ & State of LSM store \\ 
        \hline
        $\Lambda$ & An LSM store \\ 
        \hline
        $\Phi$ & Public parameters \\ 
        \hline
        $L_i$ & Level $i$ in LSM Tree \\ 
        \hline
        $R_{i,x}$ & Run $x$ in level $L_i$ \\ 
        \hline
        $N_{R}(L_i)$ & Number of runs in $L_i$ \\ 
        \hline
        $E_{C}(L_i)$ & Expected access cost in $L_i$ \\ 
        \hline
        $(k,v)$ & Key-value pair \\ 
        \hline
        $C_{r}(\Phi)$ & Construction algorithm\\
        \hline
        $Q(k, \sigma)$ & Query operation for key $k$ \\ 
        \hline
        $I_r((k, v), \sigma)$ & Insert operation \\ 
        \hline
        $\perp$ & Tombstone (deletion marker) \\ 
        \hline
        \multicolumn{2}{|c|}{\textbf{Bloom Filter Notation}} \\ 
        \hline
        $\Pi = ({C_{\dagger}}_r, Q_{\dagger})$ & Bloom Filter (BF) construction/query \\ 
        \hline
        $\sigma_{\dagger}$ & Internal state of BF \\ 
        \hline
        $m_{\dagger}$ & BF bit array size \\ 
        \hline
        $k_{\dagger}$ & Number of BF hash functions \\ 
        \hline
        $h_i(k)$ & $i^{th}$ BF hash function \\ 
        \hline
        $P_B(R_{i,x})$ & BF false positive probability \\ 
        \hline
        \multicolumn{2}{|c|}{\textbf{Adversarial Model Notation}} \\ 
        \hline
        $\mathcal{A} \in \mathbb{A}$ & Computationally-bound adversary \\
        \hline
        $\mathcal{A} \in \mathbb{A}_{BF}$ & BF-targeting adversary \\ 
        \hline
        $\lambda$ & Security parameter \\
        \hline
        $t$ & Number of adversary queries \\ 
        \hline
        \textsc{Smash-Lsm} & Adversarial game for LSM stores \\ 
        \hline
        \textsc{Smash-Bloom} & Adversarial game for Bloom Filters \\ 
        \hline
        $F_\kappa$ & Keyed pseudorandom permutation \\ 
        \hline
        $\epsilon$ & False positive/security bound \\ 
        \hline
        \multicolumn{2}{|c|}{\textbf{Oracle Notation}} \\
        \hline
        ${\mathscr{O}_{Q}(k)}$ & returns $\top$ if $\exists$ a BF $\Pi$ in LSM $\Lambda$ s.t $\Pi(k) = \top$\\
        \hline
        $\mathscr{O}_R$ & Returns state $\sigma_{\dagger}$ of each BF $\Pi_{i}$ in LSM $\Lambda$\\
        \hline
        $\mathscr{O}_I(k,v)$ & Inserts $(k,v)$ into LSM $\Lambda$\\
        \hline
        $\mathscr{O}_C(\Phi)$ & Constructs LSM $\Lambda$ with parameters $\Phi$ \\ 
        \hline
        ${\mathscr{O}_{\dagger}}_Q(k)$ & Queries BF $\Pi$ for $k$\\ 
        \hline
        ${\mathscr{O}_\dagger}_R$ & Returns state $\sigma_{\dagger}$ of BF $\Pi$ \\ 
        \hline
    \end{tabular}
    \vspace{0.2em}
    \caption{Mathematical notation used in the paper}
    \label{tab:notation}
\end{table}

\subsection{LSM Axioms} 

We define two axioms to reason about the performance of an LSM store, the Axiom of Cost (Axiom~\ref{axiom:access}) and the Axiom of Recency (Axiom~\ref{axiom:recency}). Our axioms hold for many popular LSM store implementations including LevelDB~\cite{level_db}, RocksDB~\cite{rocks_db}, Monkey~\cite{monkey}, WiredTiger~\cite{wired_tiger}, HBase~\cite{hbase} and Cassandra~\cite{cassandra}.

\begin{axiom}[Axiom of Cost]\label{axiom:access}
If $i > j$, then $E_{C}(L_{i}) \geq E_{C}(L_{j})$ where $E_{C}(L_{x})$ is the expected cost of accessing an entry in $L_{x}$.
\end{axiom}

\noindent LSM stores optimize for writes (inserts, deletes, and updates) using Axiom~\ref{axiom:access}. The LSM store immediately stores written entries in the in-memory buffer in $L_{0}$ without accessing slower storage in higher levels~\cite{monkey}. When the $L_{0}$ buffer reaches capacity, it is sorted (by key) and flushed to $L_{1}$. These sorted arrays are called \textit{runs}~\cite{monkey,shubham_subhadeep_2024,spooky}. We refer to a run $x$ within a level $L_{i}$ as $R_{i, x}$. We denote the number of runs within a level $L_{i}$ by $\mathcal{N}_{R}(L_{i})$. The value of $\mathcal{N}_{R}(L_{i})$ for a given LSM Tree depends on the merge policy of the LSM store~\cite{monkey}.

\begin{axiom}[Axiom of Recency]\label{axiom:recency}
For an entry $(k, v)$ in $R_{i, x}$ and an entry $(k, v^{\prime})$ in $R_{j, y}$ (with the same key $k$) if one of the following two conditions holds, then value $v$ was written before value $v^{\prime}$: $i > j$ (Case 1), or $i = j$ and $x > y$ (Case 2).
\end{axiom}

\noindent Using these axioms, we can explain LSM store design and performance.

\subsection{Zero-result Lookups.} 

In an LSM Tree with $L_{\mathcal{N}}$ levels, when any key $k$ is queried, the LSM store begins at the in-memory buffer at $L_{0}$ and traverses from $L_{0}$ to $L_{\mathcal{N}}$ in ascending order. If the LSM store finds a matching key at level $L_{i}$, the query returns early~\cite{monkey}. Entries at higher levels are superseded by more recent entries at lower levels thanks to Axiom~\ref{axiom:recency}, so looking further is unnecessary. For a key whose most recent entry is present in run $R_{i, x}$, the number of runs an LSM store needs to probe is at most $\sum_{l = 0}^{i} \sum_{r = 0}^{N_{R}(L_{l})} E_{C}(R_{l, r})$ where $E_{C}(r)$ is the expected I/O cost of probing run $r$.

A query on a key $k$ that is not stored in the LSM Tree is called a zero-result lookup. Such queries have a high worst-case I/O cost because the LSM store must probe every run within every level before the LSM store can be sure the key does not exist. Zero-result lookups are very common in practice~\cite{bLSM, monkey}. They have been the focus of LSM store analysis and new LSM store designs proposed by prior work~\cite{wacky,monkey}. A zero-result lookup is the worst-case lookup time~\cite{monkey} because the number of runs the LSM store must prove now is $\sum_{l = 0}^{\mathcal{N}} \sum_{r - 0}^{N_{R}(L_{l})} R_{l, r}$. This is the worst case for a point query because this is a probe of every run present in the LSM store.\vspace{0.2em}

\subsection{Auxiliary Data Structures} 

For each run $R_{i, x}$ in an LSM Tree, modern LSM stores store two auxiliary data structures in main memory~\cite{monkey}: a Bloom Filter (BF)~\cite{bloom_1970} and an array of fence pointers~\cite{huynh_et_al_2022,monkey}. The fence pointers contain min/max information for the disk pages that store run $R_{i, x}$~\cite{rocks_db,level_db}. In a point query, the LSM store uses this array of fence pointers when traversing a level to figure out which disk page to read when searching for a key~\cite{monkey}. The LSM store only has to read one disk page for each run. 

For a point query, the LSM store first probes a run's Bloom Filter. The LSM store only accesses the run in secondary storage if the corresponding BF returns positive. If the run indicated by the BF does indeed have the key, the LSM store returns the query early following Axiom~\ref{axiom:access}. However, the BF is a probabilistic data structure with one-sided errors, so it can have false positives with some probability~\cite{bloom_1970,mitzenmacher_broder}. In the case of a false positive, the query continues to run~\cite{monkey}. Having a Bloom Filter in main memory for each run reduces the expected I/O cost of the LSM store for a point query, depending on the false positive probability of the BF. For a key whose most recent entry is present in run $R_{i, x}$, the LSM store needs to probe, in expectation: \[ E_{C}(R_{0,0}) \cdot ( \sum_{l = 0}^{i} \sum_{r = 0}^{N_{R}(L_{i - 1})} P_{B}(R_{l, r}) E_{C}(R_{l, r}) + \sum_{r = 0}^{x - 1} P_{B}(R_{i, r}) R_{i, r} + R_{i, x}))\]

\noindent where $P_{B}(r)$ is the false positive probability of the BF corresponding to run $r$. The extra $E_{C}(R_{0,0})$ factor comes from the fact that each BF resides in the main memory, which involves the cost of accessing a main memory page. Since $R_{0, 0}$ (the run at $L_{0}$) is also a main memory page. Therefore, the expected cost of accessing a main memory page is the same as that of accessing run $R_{0,0}$.
 
\section{Adversarial Model}\label{sec:adversarial_model}

In this section, we first discuss the performance degradation caused by adversarial attacks and the feasibility of such attacks. We then introduce a game-based adversarial model for LSM stores and propose a security definition for LSM stores.

\begin{figure}
    \centering
    \includegraphics[width=0.65\linewidth]{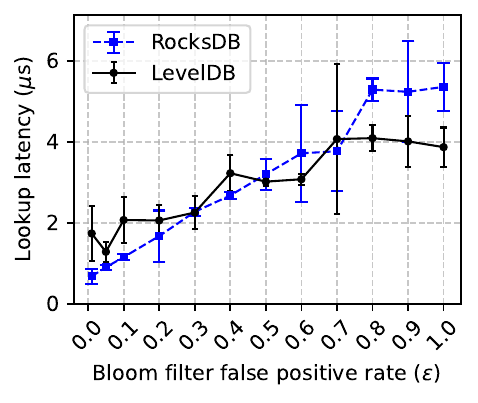}
    \caption{Performance degradation of zero-result lookups on a uniformly random query workload}
    \Description{Graph showing performance degradation of point queries for RocksDB on a uniformly random workload.}
    \label{fig:zero_result_uniform_random}
\end{figure}

\subsection{Performance Degradation} 
The key idea for our adversarial workloads is to target the false positive rate (FPR) of the per-run Bloom Filters implemented in an LSM store. Since most popular LSM stores~\cite{level_db,rocks_db,monkey} use BFs with non-cryptographic hashes, such attacks are feasible and inexpensive. Even if an adversary is allowed a very small bound on insertions, they can still fully insert enough elements to saturate the BFs. For a Standard Bloom Filter implementation with a memory budget of $m_{\dagger}$ bits and $k_{\dagger}$ hashes, only $\frac{m_{\dagger}}{k_{\dagger}}$ adversarial insertions are required to make the BF's FPR become $1$. Figure~\ref{fig:zero_result_uniform_random} shows the impact of Bloom Filter FPR on the zero-result lookup latency for LevelDB~\cite{level_db} and RocksDB~\cite{rocks_db}. We insert $10$M keys in batch sizes of $10$K. We benchmark $50$K uniformly randomly sampled zero-result look-ups for each experiment. Between the insertion stage and the lookup stage, we save and re-open the store to mitigate the effects of caching. As BF FPR increases, the lookup latency of LevelDB and RocksDB becomes $2$\texttt{x} and $8$\texttt{x} respectively.\vspace{0.2em}

\subsection{Feasibility of Attacks} 

We first give an overview of how the standard implementation of a Bloom Filter first suggested in~\cite{bloom_1970} and used in LevelDB~\cite{level_db} and RocksDB~\cite{rocks_db} works. A Standard Bloom Filter (SBF) construction is a zero-initialized array of $m_{\dagger}$ bits~\cite{mitzenmacher_broder,gerbet_2015} and requires a family of $k_{\dagger}$ independent hash functions, $\,\, h_{i, m}: \domain \mapsto [m_{\dagger}]$ for all $i \in [k_{\dagger}]$\cite{mitzenmacher_broder,hayder_2025}. Upon setup, For each element $x \in S$ ($S$ is the set being encoded by the Bloom Filter), the bits $h_{i}(x)$ are set to $1$ for $i \in [k_{\dagger}]$. When querying an element $x$, we return true if all $h_{i}(x)$ map to bits that are set to $1$. If there exists an
$h_{i}(x)$ that maps to a bit that is $0$, we return false.

\begin{figure}[H]
  \centering
\scalebox{1.0}{ 
\begin{minipage}{\columnwidth} 
    \centering
    \includegraphics[trim={4.5cm 6cm 4.5cm 2cm},width=0.32\textwidth,clip]{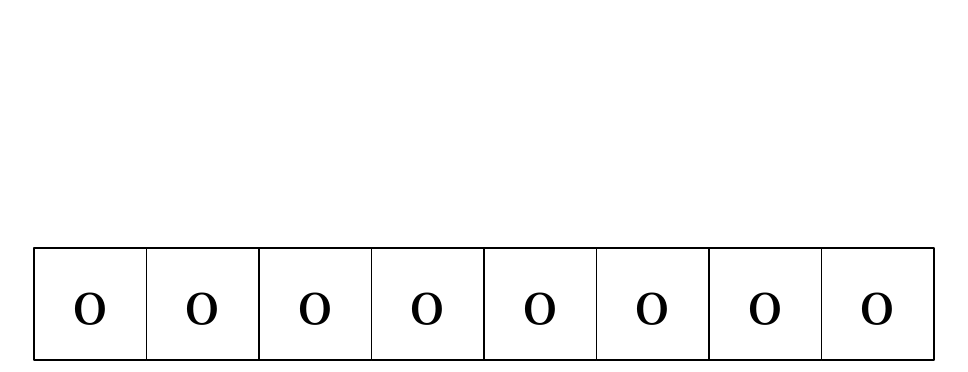}
    \includegraphics[trim={4.5cm 6cm 4.5cm 2cm},width=0.32\textwidth,clip]{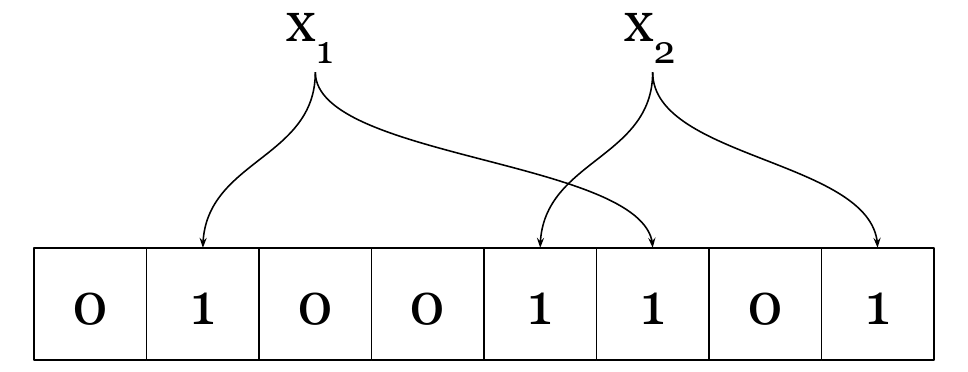}
    \includegraphics[trim={4.5cm 6cm 4.5cm 2cm},width=0.32\textwidth,clip]{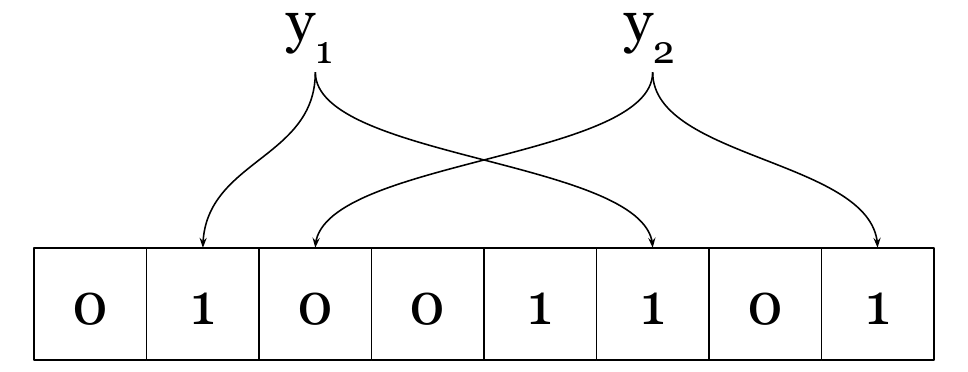}
\end{minipage}
}
    \caption{Bloom Filter example ($m_{\dagger}$ = 8, $k_{\dagger}$ = 2) adapted from~\cite{hayder_2024}. Inserted $x_{i}$ is hashed $k_{\dagger}$ times, setting mapped bits. Queried $y_{i}$ is hashed $k$ times. If a mapped bit is unset, $y_{i} \notin S$. Otherwise $y_{i}$ is in $S$ or a false positive}\label{fig:bloomintro}
    \Description{Three graphs showing how a Bloom Filter is initialized, performs inserts, and performs queries.}
\end{figure}

The expected number of entries to fully saturate an SBF is $\lfloor \frac{m_{\dagger}\log{m_{\dagger}}}{k_{\dagger}} \rfloor $~\cite{gerbet_2015}. An adversary can pick well-chosen items to insert such that sets $k_{\dagger}$ previously unset bits (one new bit for each hash function). This reduces the number of entries to fully saturate an SBF down to $\lfloor \frac{m_{\dagger}}{k_{\dagger}} \rfloor$. ~\cite{gerbet_2015} show many practical saturation attacks for real-world Standard Bloom Filter deployments. This is particularly easy to do when a non-cryptographic hash is invertible. However, even a brute force strategy where the adversary crafts well-chosen inputs offline does not take a large amount of time if the memory budget of a Bloom Filter is small. Since the entire point of a Bloom Filter is for it to use a small number of bits (otherwise we can simply replace it with a non-probabilistic data structure such as a hash-table-based set), this is commonly the case. Our inexpensive experimental setup (Section~\ref{sec:resilience}) running a sequential brute force algorithm can fully saturate LevelDB's Bloom Filter implementation with a memory budget of up to $4$ kilobits in about $11$ seconds.

\begin{figure}
    \centering
    \includegraphics[width=0.75\linewidth]{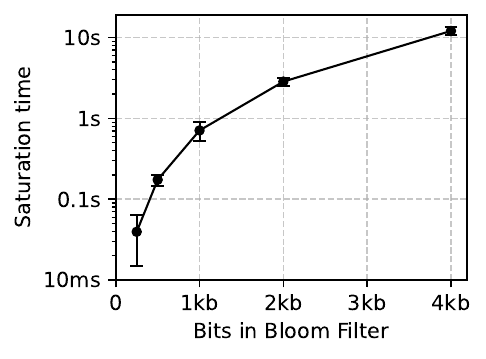}
    \caption{Time taken by a brute force algorithm running sequentially on a local machine to saturate LevelDB's Bloom Filter implementation with various memory budgets.}
    \Description{A line graph showing a brute force algorithm taking O(seconds) to saturate Bloom Filters}
    \label{fig:bloom_filter_brute_force_saturation}
\end{figure}

\begin{figure*}
    \centering
    \includegraphics[width=0.39\linewidth]{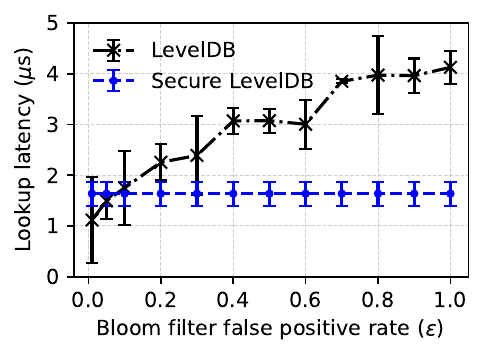}
    \includegraphics[width=0.38\linewidth]{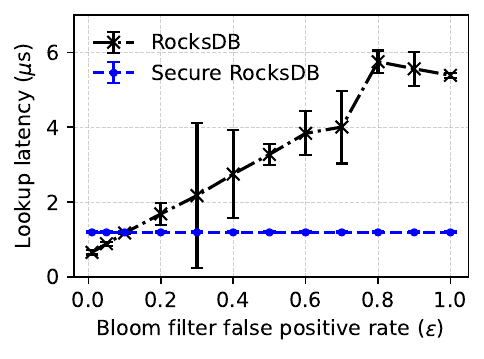}
    \caption{Zero-result lookup performance on a uniformly random query workload for LevelDB ( left) and RocksDB (right) with adversarial resilience.}
    \label{fig:zero_result_evaluation_secure}
\end{figure*}

\subsection{Game-based Model}

We define a self-contained game-based~\cite{naor_eylon,naor_oved,hayder_2024} adversarial model for LSM stores that is restricted to adversaries that target an LSM store's Bloom Filters. For cryptography-focused readers, we also define a more general simulator-based~\cite{lindell_2016} adversarial model for any class of adversaries in Appendix~\ref{sec:simulator_based_adversarial_model}. We will be assuming our adversary is \textit{efficient}. We will assume that the adversary works in non-uniform probabilistic polynomial time (n.u. p.p.t). This standard notion is used to model efficient adversaries in cryptography literature~\cite{pass_shelat}. Moreover, we will restrict our model to $\mathbb{A}_{\text{BF}}$, the set of efficient adversaries that target the Bloom Filters in the LSM store. We discuss other adversarial targets in Appendix~\ref{sec:adversarial_targets}. To capture the notion of adversary resilience in an LSM store setting, we propose a game inspired by prior work on game-based adversary models for other probabilistic data structures~\cite{naor_eylon, naor_oved}. We first define the notion of a Bloom Filter in our adversarial game. 

\begin{definition}[Bloom Filter]\label{def:bloom_filter} Let $\Pi = ({C_{\dagger}}_{r}, {Q}_{\dagger})$  be a pair of polynomial time algorithms. ${C_{\dagger}}_{r}$ is randomized, it takes a set $S_{\dagger} \subseteq \mathscr{D}$ as input and outputs a state $\sigma_{\dagger}$. $Q_{\dagger}$ is deterministic, it takes as a state $\sigma_{\dagger}$ and a key $k \in \mathscr{D}$ as input and returns $y \ in \{\top, \bot\}$. $\Pi$ is an $(n, \epsilon)$-BF if for all sets $S^{\Pi} \subseteq \domain$ of cardinality $n$ and suitable public parameters $\Phi_{\dagger}$, the following two properties hold.

\begin{itemize}
    \item Completeness: $\forall{x} \in S : P[Q_{\dagger}(x, {C_{\dagger}}_{r}(\Phi_{\dagger}, S_{\dagger})) = \top] = 1$\vspace{0.2em}
    \item Soundness: $\forall x \notin S : P[Q_{\dagger}(x, {C_{\dagger}}_{r}(\Phi_{\dagger}, S_{\dagger})) = \top] \leq \epsilon$
\end{itemize}

\noindent where the probabilities are over the coins of ${C_{\dagger}}_{r}$.
\end{definition}

\noindent In our game, the adversary $\mathcal{A} = (\mathcal{A}_{C}, \mathcal{A}_{Q})$ consists of two parts: $\mathcal{A}_{C}$ chooses a list $\mathcal{L} \subset \mathscr{D} \times (\mathscr{D} \cup \{ \bot \}$. $\mathcal{A}_{Q}$ gets $\mathcal{L}$ as input and attempts to find a false positive key $k$ given only oracle access to the LSM store $\Lambda$ initialized with $\mathcal{L}$. $\mathcal{A}$ succeeds if key $k$ is not among the queried elements and is a false positive. We measure the success probability of $\mathcal{A}$ for the random coins in $\Lambda$ and in $\mathcal{A}$. For computationally bound adversaries, our game includes a security parameter $\lambda$ which is given to the adversary $\mathcal{A}$ in unary $1^{\lambda}$ and given to the LSM store implicitly as part of the public parameters $\Phi$. For LSM store $\Lambda = (C_{r}, I_{r}, Q)$, the adversary $\mathcal{A}$ is allowed access to two oracles. The first oracle, ${\mathscr{O}_{Q}(k)}$, returns $\top$ if there exists a Bloom Filter $\Pi$ in LSM store $\Lambda$ such that $\Pi(k) = \top$. Otherwise, it returns $\bot$. The second oracle, ${\mathscr{O}_{R}}$ returns a list ${\sigma_{\dagger}}_{i}$ of internal states for each Bloom Filter $\Pi_{i}$ in the LSM store $\Lambda$.

\begin{game}[\textsc{Smash-Lsm}]\label{game:smash_lsm}
    We have a challenger $\Upsilon$, security parameter $\lambda$, and a n.u p.p.t adversary $\mathcal{A} = (\mathcal{A}_{C}, \mathcal{A}_{Q})$. $\mathcal{A}$ is a Bloom Filter targeting adversary, $\mathcal{A} \in \mathbb{A}_{\text{BF}}$. We have an LSM store $\Lambda$ with public parameters $\Phi$. We define the game \textsc{Smash-Lsm}($\mathcal{A}, t, \lambda$) as follows. 
    \begin{description}
        \item[Step 1] $\mathcal{L} \sampledfrom A_{C}(1^{\lambda})$ where $ \mathcal{L} \subset \mathscr{D} \times (\mathscr{D} \cup \{ \bot \})$ and $|\mathcal{L}| = n.$
        \item[Step 2]  $\Upsilon$ initializes $\Lambda$ with $\sigma \sampledfrom C_{r}(\Phi)$ and invokes $\sigma \sampledfrom I_{r}((k, v), \sigma)$ for each $(k, v) \in \mathcal{L}$.
        \item[Step 3] $k_{\mathcal{A}} \sampledfrom \mathcal{A}_{Q}(1^{\lambda}, \mathcal{L})$. $\mathcal{A}_{Q}$ can make at most $t$ queries $k_1, \cdots, k_{t}$ to $\mathscr{O}_{Q}$ and unbounded queries to $\mathscr{O}_{R}$.
        \item[Step 4] We denote the set of keys in $\mathcal{L}$ by $L_{k}$ . If $k_{\mathcal{A}} \notin \mathcal{L}_{k} \cup \{ k_{1}, \cdot, k_{t} \}$ and $\mathscr{O}_{Q}(k_{\mathcal{A}}) = \top$, $\mathcal{A}$ wins (\textsc{Smash-Lsm} returns $\top$). Otherwise, $\mathcal{A}$ loses (\textsc{Smash-Lsm} returns $\bot$). 
    \end{description}
\end{game}

\noindent Clearly, if adversary $\mathcal{A}$ wins the $\textsc{Smash-Lsm}$ game, $\mathcal{A}$ has \emph{forced an unnecessary run access} by exploiting Bloom Filter false positives in the LSM store.\vspace{0.2em}

\subsection{Security Definition} 

We now define our notion of the security of an LSM store $\Lambda$ against Bloom Filter targeting adversaries $\mathbb{A}_{\text{BF}}$ using the \textsc{Smash-LSM} game.

\begin{definition}\label{def:secure_lsm_tree}
    For an LSM store $\Lambda$, we say that $\Lambda$ is $(n, t, \varepsilon)$-secure against Bloom Filter targeting adversaries if for all n.u p.p.t adversaries $\mathcal{A} \in \mathbb{A}_{\text{BF}}$ and for all lists of cardinality $n$, for all large enough $\lambda \in \mathbb{N}$, it holds that
    \[
    \text{Pr}[\textsc{Smash-Lsm}(\mathcal{A}, t, \lambda) = \top] \leq \varepsilon
    \]

    \noindent where the probabilities are taken over the random coins of $\Lambda$ and $\mathcal{A}$.
\end{definition}

\noindent The definition essentially states a requirement for declaring that an LSM store is  secure against computationally bound adversaries that are looking to sabotage its Bloom Filters. We have to prove that no such adversary can, with high probability, find a previously unused key that results in a false positive for any of the Bloom Filters being used by the LSM store. This must be proven, even if the adversary chooses the initial list of keys to be inserted into the LSM store, even if the adversary makes $t$ queries to the LSM store's Bloom Filters before answering, and even if the adversary can look at the internal state of the Bloom Filters.

\section{Adversary Resilience}\label{sec:resilience}

We construct an adversary-resilient implementation of LevelDB~\cite{level_db} and RocksDB~\cite{rocks_db}. Our secure construction is simple and easily pluggable in any popular LSM store. Instead of writing and reading keys directly, we simply patch the LSM store API to maintain a keyed pseudorandom permutation (PRP) and read/write the permuted value of the key. The values are kept unchanged. This does not affect the correctness of the LSM store. As an example, assume we use a PRP that permutes $a$ to $x$ and $b$ to $y$. Inserting two key-value pairs $(a, v_{1})$ and $(b, v_{2})$ will instead insert $(x, v_{1})$ and $(y, v_{2})$. When key $a$ is queries, since our PRP is consistent, it is mapped again to $x$ and $\Lambda$ returns the correct corresponding value $v_{1}$. PRPs are bijections therefore the keys themselves can also be recovered by running the inverse permutation. We informally summarize our result here and then rigorously prove it below in Section~\ref{sec:security_proofs}.

\begin{theorem}
    Let $\Lambda = (C_{r}, I_{r}, Q)$ be an LSM store using $m$ bits of memory for its Bloom Filters. If pseudo-random permutations exist, then there exists a negligible function $\mathrm{negl(\cdot)}$ such that for security parameter $\lambda$, there exists an LSM engine $\Lambda^{\prime}$ that is $(n, t, \epsilon + \mathrm{negl}(\lambda))$-secure against Bloom Filter targeting adversaries $\mathbb{A}_{\text{BF}}$ using ${m}^{'} = m + \lambda$ bits of memory for its Bloom Filters.
\end{theorem}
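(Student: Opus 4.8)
The plan is to prove the theorem by a standard two-step hybrid (game-hopping) argument layered on top of the explicit PRP-patching construction sketched above. Concretely, I would define $\Lambda' = (C'_{r}, I'_{r}, Q')$ as follows: $C'_{r}$ runs $C_{r}(\Phi)$ and additionally samples a key $\kappa \sampledfrom \{0,1\}^{\lambda}$ for the PRP $F_{\kappa}$, storing $\kappa$ in the state; $I'_{r}((k,v), \sigma)$ inserts $(F_{\kappa}(k), v)$ via the underlying $I_{r}$; and $Q'(k, \sigma)$ returns $Q(F_{\kappa}(k), \sigma)$. Since $F_{\kappa}$ is a bijection, correctness is inherited verbatim from $\Lambda$, and the only overhead of the patch is the $\lambda$-bit key, which the statement charges to the Bloom-filter budget, giving $m' = m + \lambda$. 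It then suffices to bound $\Pr[\textsc{Smash-Lsm}(\mathcal{A}, t, \lambda) = \top]$ for $\Lambda'$ by $\epsilon + \mathrm{negl}(\lambda)$ for every n.u.~p.p.t.\ adversary $\mathcal{A} \in \mathbb{A}_{\text{BF}}$.

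First I would replace the keyed PRP $F_{\kappa}$ by a truly random permutation $\pi$ over $\domain$ and argue that this changes $\mathcal{A}$'s winning probability by at most a negligible amount. The reduction is direct: given any $\mathcal{A}$ that behaves differently in the two worlds, I build a PRP distinguisher $\mathcal{D}$ that receives oracle access to either $F_{\kappa}$ or $\pi$, internally simulates the whole \textsc{Smash-Lsm} experiment — the construction, the $n$ insertions drawn from the adversary's list $\mathcal{L}$, and both oracles $\mathscr{O}_{Q}$ and $\mathscr{O}_{R}$ — and outputs $1$ exactly when $\mathcal{A}$ wins. The crucial observation is that every place where $\Lambda'$ touches the secret key is a \emph{forward} evaluation of the permutation on a concrete input (an inserted key, or a key $k$ passed to $\mathscr{O}_{Q}$, which $\mathcal{D}$ first maps through its oracle before probing the filters); hence $\mathcal{D}$ faithfully emulates both worlds using only its permutation oracle, and its distinguishing advantage equals the gap between $\mathcal{A}$'s success probabilities. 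PRP security then bounds this gap by some $\mathrm{negl}(\lambda)$.

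Next I would bound $\mathcal{A}$'s success in the idealized, random-permutation world by $\epsilon$. By the winning condition of Step~4, the key $k_{\mathcal{A}}$ output by $\mathcal{A}_{Q}$ lies outside $\mathcal{L}_{k} \cup \{k_{1}, \dots, k_{t}\}$, i.e.\ it was neither inserted nor submitted to $\mathscr{O}_{Q}$. The entire transcript seen by $\mathcal{A}$ — the list $\mathcal{L}$ it chose, the $t$ answers from $\mathscr{O}_{Q}$, and the filter states revealed by $\mathscr{O}_{R}$ — depends on $\pi$ only through its values on those already-touched inputs. By a lazy-sampling argument, $\pi(k_{\mathcal{A}})$ is therefore uniformly distributed over the domain points not yet fixed by the transcript and independent of $\mathcal{A}$'s view; in particular the image set $\{\pi(k) : k \in \mathcal{L}_{k}\}$ on which the filters were built is independent of $\pi(k_{\mathcal{A}})$. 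Thus $\pi(k_{\mathcal{A}})$ behaves exactly like a uniformly random non-member, and the soundness clause of the $(n, \epsilon)$-BF guarantee (Definition~\ref{def:bloom_filter}) caps the probability that it is a false positive of the LSM store at $\epsilon$. Combining the two steps yields the claimed $\epsilon + \mathrm{negl}(\lambda)$ bound.

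The hard part will be this independence/soundness step, not the PRP hop. Two points need care. First, I must check that revealing the \emph{full} filter states through $\mathscr{O}_{R}$ leaks nothing useful: those states are a deterministic function of the inserted images $\{\pi(k) : k \in \mathcal{L}_{k}\}$, which the lazy-sampling argument keeps independent of $\pi(k_{\mathcal{A}})$, so conditioning on the whole transcript leaves $\pi(k_{\mathcal{A}})$ uniform on the untouched domain. Second, the soundness clause of Definition~\ref{def:bloom_filter} is stated for a single filter and a fresh non-member, whereas winning \textsc{Smash-Lsm} only requires that \emph{some} per-run filter fires; transferring the bound therefore needs either a union bound over the runs folded into the reading of $\epsilon$ (the LSM store's aggregate zero-result false-positive probability) or a direct aggregate soundness assumption, and this is precisely where the extra $\lambda$ bits of budget provide the slack. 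I would also confirm that at most $n + t$ domain points are fixed by the transcript, so that ``uniform over the untouched domain'' deviates from ``uniform over $\domain$'' by only $O((n+t)/|\domain|)$, an amount absorbable into $\mathrm{negl}(\lambda)$.
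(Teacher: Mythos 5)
Your proposal is correct, but it takes a different route from the paper. The paper is modular: it first proves a standalone theorem (its Theorem~5.1, following Naor--Yogev) that wrapping a Bloom Filter's inputs in a keyed PRP yields an $(n,t,\epsilon+\mathrm{negl}(\lambda))$-secure BF in the \textsc{Smash-Bloom} game, and then proves the LSM statement by a short contradiction argument: any adversary winning \textsc{Smash-Lsm} against $\Lambda'$ with probability above $\epsilon+\mathrm{negl}(\lambda)$ would, via the oracle $\mathscr{O}_{Q}$, yield an adversary winning \textsc{Smash-Bloom} against one of the per-run filters, contradicting Theorem~5.1. You instead inline the entire argument at the LSM level: one PRP-to-random-permutation hop over the whole \textsc{Smash-Lsm} transcript, followed by a lazy-sampling/independence argument showing $\pi(k_{\mathcal{A}})$ is a fresh uniform non-member, so per-filter soundness applies. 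The two are essentially the same cryptographic content organized differently; the paper's decomposition buys reusability (the BF lemma stands alone) and a shorter top-level proof, while your monolithic version avoids formalizing \textsc{Smash-Bloom} and makes explicit exactly what the reduction simulates (including that $\mathscr{O}_{R}$'s leakage is a deterministic function of the permuted insertions), which the paper leaves implicit. Two further points in your favor: the paper's construction contains a typo ($I^{\prime}_{r}$ is defined via $C_{r}$ rather than $I_{r}$), which your construction silently corrects; and you explicitly flag the gap between single-filter soundness and the any-filter winning condition of \textsc{Smash-Lsm}, which requires a union bound over the runs (costing a factor equal to the number of filters, or a reinterpretation of $\epsilon$ as an aggregate bound). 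The paper's reduction quietly suffers from the same issue---it asserts that a \textsc{Smash-Lsm} winner yields a \textsc{Smash-Bloom} winner against a single filter ``with the same probability,'' without identifying which filter---so your honest accounting of this step is a strength, not a defect, though to fully close your proof you would need to carry that union bound through the final bound rather than leave it as an alternative.
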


\noindent \textit{Proof Sketch:} Applying a keyed PRP to LSM store keys before insertion and query ensures that adversaries cannot control key placements to force high false positive rates. Correctness is due to PRP bijectivity: queries map consistently to transformed keys, retrieving the correct values. Security follows from PRP indistinguishability. If an adversary could significantly increase false positives, they could distinguish the PRP from random, which contradicts the existence of PRPs. We need extra $\lambda$ bits of memory to store the PRP's secret key.\vspace{0.2em}

\begin{figure*}
    \centering
    \includegraphics[width=0.39\linewidth]{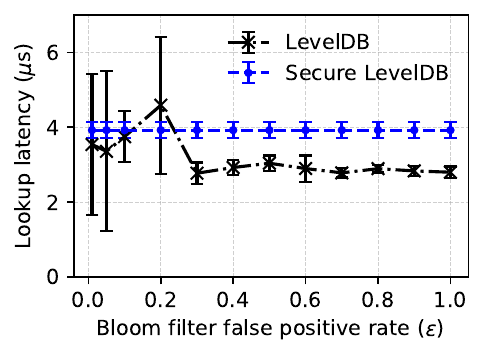}
    \includegraphics[width=0.39\linewidth]{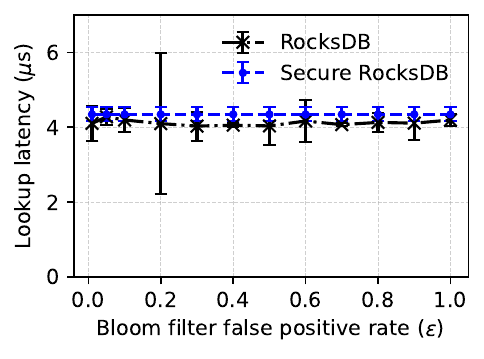}
    \caption{Existing query lookup performance on a uniformly random query workload for LevelDB (left) and RocksDB (right) with adversarial resilience.}
    \label{fig:existing_result_evaluation_secure}
\end{figure*}

\subsection{Implementation \& Experimental Setup} 

We use the experimental setup discussed here for all experiments in this work. We use an Apple M2 processor with $8$ cores, $8$ GB of Memory, a $128$ KB L1 cache, a $1$ MB L2 cache, and $256$ GB of SSD storage out of which $245$ GB is usable storage. We evaluated the most recent versions of RocksDB ($9.10$) and LevelDB ($1.23$). We run each experiment $5$ times and display the median, the error bars indicate standard deviations. We implement security as an easily pluggable module written in C++ 17 for both LevelDB and RocksDB. Our implementation relies on the hardness of AES. Our implementation uses the AES-128 implementation provided by OpenSSL v3.4.0 in CTR mode. We use a $16$ byte ($\lambda = 128$ bits) AES secret key to parameterize the PRP. To prevent secret key leakage, production-scale deployments can rely on secret stores and ephemeral key rotation mechanisms. We have released our implementation and evaluation as open-source software.

\subsection{Performance} 
We compare our adversary-resilient implementations to untouched releases of LevelDB and RocksDB under adversarial workloads. Figures\ref{fig:zero_result_evaluation_secure} and~\ref{fig:existing_result_evaluation_secure} show performance under the same workload as Section~\ref{sec:adversarial_model}. For zero-result lookups, our implementations reduce latency by 60\% and 78\% for LevelDB and RocksDB respectively. Our implementations are slower for workloads with existing keys (where not all the runs need to be probed, in expectation) by 30\% and 6\% for LevelDB and RocksDB respectively. We also measured the latency for $50$ K uniformly random inserts with our implementation and saw a median latency increase of 39\% and 49\% for LevelDB and RocksDB respectively. These increases are due to the extra compute overhead of AES for every key. This is a constant compute cost that can potentially be reduced by using a hardware-implemented version of AES such as AES-NI~\cite{naor_eylon} for Intel processors. This overhead also is not a major concern when I/O dominates lookup performance, which is frequently the case for LSM stores~\cite{monkey}.

\section{Security Proofs}\label{sec:security_proofs}

We first define the notion of secure Bloom Filters (Def.~\ref{def:bloom_filter}). We use known definitions~\cite{filic_2022,hayder_2025} for Bloom Filter (BF) constructions. A BF, $\Pi$ consists of two algorithms.\vspace{0.2em}

\noindent \textbf{Construction}. $\sigma_{\dagger} \leftarrow {C_{\dagger}}_{r}(\Phi_{\dagger}, S_{\dagger})$ sets up the initial state of a BF with public parameters $\Phi_{\dagger}$ and a given set $S_{\dagger} \subseteq \domain$.\vspace{0.2em}


\noindent \textbf{Query}. $b \leftarrow Q_{\dagger}(x, \sigma_{\dagger})$, given an element $x \in \domain$ returns a boolean $b \in \{ \bot, \top \}$. The return value approximately answers whether $x \in S_{\dagger}$ ($b = \top$) or $x \notin S_{\dagger}$ ($b \neq \bot$).\vspace{0.2em}

\noindent The construction algorithm $C_{r}$ is called first to initialize $\Pi$. The query algorithm $Q$ is not allowed to change the value of the state. While $C_{r}$ is randomized, $Q$ is deterministic. Both algorithms always succeed. A class of BFs can be uniquely identified by its algorithms: $\Pi = (C_{r}, Q)$.

We now define a well-known~\cite{naor_eylon,naor_oved} security game for Bloom Filters. Our adversary, $\mathcal{A}_{\dagger} = ({\mathcal{A}_{\dagger}}_{C}, {\mathcal{A}_{\dagger}}_{Q}$ consists of two parts: ${\mathcal{A}_{\dagger}}_{C}$ chooses a set $S_{\dagger} \subset \mathscr{D}$. ${\mathcal{A}_{\dagger}}_{Q}$ gets $S_{\dagger}$ as input and attempts to find a false positive key $k$ given only oracle access to the BF $\Pi$ initialized with $S_{\dagger}$. We measure the success probability of $\mathcal{A}_{\dagger}$ for the random coins in $\Pi$ and in $\mathcal{A}_{\dagger}$. For computationally bound adversaries, our game includes a security parameter $\lambda$ which is given to the adversary $\mathcal{A}_{\dagger}$ in unary $1^{\lambda}$ and given to the BF implicitly as part of the public parameters $\Phi_{\dagger}$.

For a BF $\Pi = ({C_{\dagger}}_{r}, Q_{\dagger})$, the adversary $\mathcal{A}_{\dagger}$ is allowed access to two oracles. The first oracle, ${\mathscr{O}_{Q_{\dagger}}(k)}$, returns ${C_{\dagger}}_{r}(k)$. The second oracle, ${\mathscr{O}_{R_{\dagger}}}$ returns ${\sigma_{\dagger}}$, the internal states for the Bloom Filter $\Pi$. Note that the internal state returned does \textbf{not} include any secret keys used in the construction. This assumption is consistent with prior work~\cite{naor_oved, naor_eylon, filic_2022,filic_2024,hayder_2025}.

\begin{game}[\textsc{Smash-Bloom}]\label{game:smash_bloom}
    We have a challenger $\Upsilon$, security parameter $\lambda$, and a n.u p.p.t adversary $\mathcal{A}_{\dagger} = ({\mathcal{A}_{\dagger}}_{C}, {\mathcal{A}_{\dagger}}_{Q}$. We have a Bloom Filter $\Pi$ with public parameters $\Phi_{\dagger}$. We define the game \textsc{Smash-Bloom}($\mathcal{A}_{\dagger}, t, \lambda$) as follows. 
    \begin{description}
        \item[Step 1] $S_{\dagger} \sampledfrom A_{{C_{\dagger}}}(1^{\lambda})$
        \item[Step 2]  $\Upsilon$ initializes $\Pi$ with $\sigma_{\dagger} \sampledfrom {C_{\dagger}}{r}(\Phi_{\dagger}, S_{\dagger})$.

        \item[Step 3] $k_{\mathcal{A}} \sampledfrom \mathcal{A}_{Q_{\dagger}}(1^{\lambda}, S_{\dagger})$. $\mathcal{A}_{Q_{\dagger}}$ can make at most $t$ queries $k_1, \cdots, k_{t}$ to $\mathscr{O}_{Q_{\dagger}}$ and unbounded queries to $\mathscr{O}_{R_{\dagger}}$.
        \item[Step 4] If $k_{\mathcal{A}} \notin S_{\dagger} \cup \{ k_{1}, \cdot, k_{t} \}$ and $\mathscr{O}_{Q_{\dagger}}(k_{\mathcal{A}}) = \top$, $\mathcal{A}$ wins (\textsc{Smash-Bloom} returns $\top$). Otherwise, $\mathcal{A}$ loses (\textsc{Smash-Bloom} returns $\bot$). 
    \end{description}
\end{game}

\begin{definition}\label{def:secure_bloom_filter}
    For an Bloom Filter $\Pi$, we say that $\Pi$ is $(n, t, \varepsilon)$-secure if for all n.u p.p.t adversaries $\mathcal{A}_{\dagger}$ and for all sets of cardinality $n$, for all large enough $\lambda \in \mathbb{N}$, it holds that
    \[
    \text{Pr}[\textsc{Smash-Bloom}(\mathcal{A}_{\dagger}, t, \lambda) = \top] \leq \varepsilon
    \]

    \noindent where probabilities are over the random coins of $\Pi$ and $\mathcal{A}_{\dagger}$.
\end{definition}

\noindent \textbf{Pseudo-random Permutations.} We provide a brief self-contained treatment of a cryptographic construction called pseudorandom permutations adapted from ~\cite{katz_lindell_2014} that will allow the construction of secure Bloom Filters. Let $\text{Perm}_{n}$ be the set of all permutations on $\{0, 1\}^{n}$.

\begin{definition}\label{def:efficientfunction}
 Let an efficient permutation $F$ be any permutation for which there exists a polynomial time algorithm to compute $F_{k}(x)$ given $k$ and $x$, and there
 also exists a polynomial time algorithm to compute $F_{k}^{-1}(x)$ given $k$ and $x$.
\end{definition}

\begin{definition}\label{def:keyedpermutation}
 Let $F : {\{0, 1\}}^{*} \times {\{0, 1\}}^{*} \mapsto {\{0, 1\}}^{*}$ be 
 an efficient, length-preserving, keyed function. $F$ is a keyed permutation if $\forall k$, $F_{k}(\cdot)$ is one-to-one.
\end{definition}

\begin{definition}\label{def:pseudo-randompermutation}
 Let $F: {\{0, 1\}}^{*} \times {\{0, 1\}}^{*} \mapsto {\{0, 1\}}^{*}$ be an efficient keyed permutation. $F$ is a pseudo-random 
 permutation if for all probabilistic polynomial time distinguishers $D$, there exists a negligible function $\mathtt{negl}$, such 
 that 
 \begin{align*}
 |\Pr[D^{F_{k}(\cdot)F_{k}^{-1}(\cdot)}(1^{n}) = 1] - \Pr[D^{f_{n}(\cdot)f_{n}^{-1}(\cdot)}(1^{n}) = 1]| \\ 
 \leq \mathtt{negl}(n)
 \end{align*}
 where the first probability is taken over uniform choice of $k \in \{0, 1\}^{n}$ and the randomness of $D$, and the second probability is taken over uniform choice of $f \in \text{Perm}_{n}$ and the randomness of $D$.
\end{definition}

\noindent \textbf{Secure Bloom Filters.} We show a well-known result for Bloom Filters initially proved by~\cite{naor_eylon}. The proof was expanded by~\cite{hayder_2024} to make it clearer that it holds even for the case where an adversary has access to the internal state of a Bloom Filter. We include a self-contained proof here using our notation based on proofs in the two cited works.

\begin{theorem}\label{thm:secure_bloom_filter}
    Let $\Pi = ({C_{\dagger}}_{r}, Q_{\dagger})$ be a Bloom Filter using $m_{\dagger}$ bits of memory. If pseudo-random permutations exist, then there exists a negligible function $\mathrm{negl(\cdot)}$ such that for security parameter $\lambda$, there exists an $(n, t, \epsilon + \mathrm{negl}(\lambda))$-secure BF using ${m_{\dagger}}^{'} = m_{\dagger} + \lambda$ bits of memory.
\end{theorem}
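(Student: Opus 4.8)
The plan is to realize the secure filter $\Pi^{\prime}$ as a thin wrapper around the given $(n,\epsilon)$-BF $\Pi$: sample a PRP key $\kappa \sampledfrom \{0,1\}^{\lambda}$, on construction encode the \emph{permuted} set $\{F_{\kappa}(x) : x \in S_{\dagger}\}$ into $\Pi$, and on a query for $x$ return $Q_{\dagger}(F_{\kappa}(x), \sigma_{\dagger})$. Storing $\kappa$ accounts for the extra bits, so ${m_{\dagger}}^{\prime} = m_{\dagger} + \lambda$. Completeness is immediate: if $x \in S_{\dagger}$ then $F_{\kappa}(x)$ was encoded, so $\Pi$ accepts; this uses only that $F_{\kappa}$ is a well-defined function, and indeed bijectivity guarantees no collisions are introduced. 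The static $\epsilon$ soundness bound of Definition~\ref{def:bloom_filter} is inherited unchanged, since $\Pi^{\prime}$ simply runs $\Pi$ on a relabelled set of the same cardinality $n$.

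First I would set up the security reduction through a single hybrid step. Let $p = \Pr[\textsc{Smash-Bloom}(\mathcal{A}_{\dagger}, t, \lambda) = \top]$ for an arbitrary n.u. p.p.t. adversary against $\Pi^{\prime}$, and let $p^{\prime}$ be the same probability in an idealized game where $F_{\kappa}$ is replaced by a truly random permutation $f \sampledfrom \text{Perm}_{\lambda}$ (applied when encoding $S_{\dagger}$, when answering the $t$ probes, and when checking $k_{\mathcal{A}}$). I would show $|p - p^{\prime}| \leq \mathrm{negl}(\lambda)$ by a distinguisher argument: a distinguisher $D$ with oracle access to either $F_{\kappa}(\cdot), F_{\kappa}^{-1}(\cdot)$ or $f(\cdot), f^{-1}(\cdot)$ simulates the challenger $\Upsilon$ and the oracles $\mathscr{O}_{Q_{\dagger}}, \mathscr{O}_{R_{\dagger}}$, routing every key transformation through its permutation oracle, runs $\mathcal{A}_{\dagger}$, and outputs $1$ exactly when the simulated game returns $\top$. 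Because the state revealed by $\mathscr{O}_{R_{\dagger}}$ carries no secret key, the simulation is perfect in each world, so $D$'s advantage equals $|p - p^{\prime}|$, which Definition~\ref{def:pseudo-randompermutation} forces to be negligible.

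Next I would bound $p^{\prime} \leq \epsilon$ information-theoretically. Through $\mathscr{O}_{R_{\dagger}}$ the adversary reads off $\sigma_{\dagger}$ and can thus identify the set of \emph{range} points that $\Pi$ falsely accepts; by worst-case soundness this set has expected size at most $\epsilon \cdot 2^{\lambda}$ over the coins of ${C_{\dagger}}_{r}$. Crucially, however, the adversary is given only forward evaluation via $\mathscr{O}_{Q_{\dagger}}$ and no inverse oracle, so although it can \emph{name} a false-positive range point it cannot compute its preimage under $f$. Since $k_{\mathcal{A}}$ is fresh (not in $S_{\dagger} \cup \{k_{1}, \dots, k_{t}\}$), the image $f(k_{\mathcal{A}})$ has never been exposed and is uniform over the range minus the at most $n + t$ points already used to encode $S_{\dagger}$ and to answer the probes. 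Hence the winning probability is the measure of the false-accept set among these fresh points, which is at most $\epsilon$ up to a correction of order $(n+t)/2^{\lambda}$. Combining with the previous step gives $p \leq \epsilon + \mathrm{negl}(\lambda)$.

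The hard part will be making this last step fully rigorous, because $\sigma_{\dagger}$ (and therefore the false-accept set) is correlated with $f$ on $S_{\dagger}$ and with the BF coins, while $k_{\mathcal{A}}$ is chosen adaptively from that same view. I would handle it by lazily sampling $f$ — fixing its values on $S_{\dagger}$ and on the probe keys first, which determines the transcript and hence $k_{\mathcal{A}}$, and only then drawing $f(k_{\mathcal{A}})$ uniformly over the remaining range — and by using that the BF coins are independent of $f$, so worst-case soundness applies pointwise before averaging. The remaining bookkeeping (completeness, the inherited $\epsilon$, the negligible slack from deleting $n+t$ points, and the distinguisher's perfect simulation) is routine.
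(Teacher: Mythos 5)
Your proposal is correct and follows essentially the same route as the paper's proof: the identical PRP-wrapper construction $\Pi^{\prime}$ (permute $S_{\dagger}$ at build time, permute queries, store $\kappa$ in the extra $\lambda$ bits), a hybrid step replacing $F_{\kappa}$ by a truly random permutation justified by a distinguisher that perfectly simulates the \textsc{Smash-Bloom} game, and an information-theoretic bound of roughly $\epsilon$ in the ideal world using the fact that the image of a fresh $k_{\mathcal{A}}$ is a uniformly random, previously unexposed range point. The paper compresses your lazy-sampling argument and the $(n+t)/2^{\lambda}$ slack into the single remark that the random image of $k_{\mathcal{A}}$ ``may be thought of as chosen before the initialization of $\Pi^{\prime}$,'' so your write-up is simply a more careful rendering of the same argument.
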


\begin{proof} We first demonstrate a secure construction. We then prove its security and correctness.\vspace{0.2em}

\noindent \textit{Construction}: Choose a key $\kappa \in \{0, 1\}^{\lambda}$ for a pseudorandom permutation $F_{\kappa}$. Let $\Pi^{\prime} = ({C_{\dagger}}^{\prime}_{r}, Q^{\prime}_{\dagger})$ where

\begin{enumerate}
    \item ${C_{\dagger}}^{\prime}_{r}(\Phi_{\dagger}, S_{\dagger}) = {C_{\dagger}}_{r}(\Phi_{\dagger}, 
S^{\prime}_{\dagger})$ where $S^{\prime}_{\dagger}$ is the permuted set $S_{\dagger}$ i.e. $S^{\prime}_{\dagger} = \{ F_{\kappa}(x): x \in S_{\dagger} \}$. 
    \item $Q^{\prime}_{\dagger}(x, \sigma_{\dagger}) = Q_{\dagger}(F_{\kappa}(x), \sigma_{\dagger})$.
\end{enumerate}

\noindent ${C^{\prime}_{\dagger}}_{r}$ initializes Bloom Filter $\Pi^{\prime}$ with $S^{\prime}_{\dagger}$. $Q^{\prime}_{\dagger}$ on input $x$ queries for $x^{'} = F_{\kappa}(x)$. The only additional memory required is for storing $\kappa$ which is $\lambda$ bits long.

\textit{Security Proof:} The security of $\Pi^{\prime}$ follows from a hybrid argument. Consider an experiment where $F_{\kappa}$ in $\Pi^{\prime}$ is replaced by a truly random oracle $\mathcal{R}(\cdot)$. Since $x$ has not been queried, $\mathcal{R}(x)$ is a truly random element that was not queried before, and we may think of it as chosen before the initialization of $\Pi^{\prime}$. No n.u. p.p.t adversary $\mathcal{A}_{\dagger}$ can distinguish between the $\Pi^{\prime}$ we constructed using $\mathcal{R}(\cdot)$ and the $\Pi^{\prime}$ construction that uses the pseudo-random permutation $F_{\kappa}$ by more than a negligible advantage. We can prove this by contradiction. Suppose that there does exist a non-negligible function $\delta(\lambda)$ such that $\mathcal{A}_{\dagger}$ can attack $\Pi^{\prime}$ and find a false positive with probability $\epsilon + \delta(\lambda)$. We can run $A_{\dagger}$ on $\Pi^{\prime}$ where the oracle is replaced by an oracle that is either random or pseudo-random, and return $1$ if $\mathcal{A}_{\dagger}$ finds a false positive. This allows us to distinguish between $\mathcal{R}(\cdot)$ and $F_{\kappa}(\cdot)$ with probability $\geq \delta(\lambda)$. This contradicts the indistinguishability of pseudo-random permutations.

\textit{Correctness proof:} $\Pi^{\prime}$ is still a valid Bloom Filter as per Def.~\ref{def:bloom_filter}. $\Pi^{\prime}$'s completeness follows from the completeness of the original BF $\Pi$. From the soundness of $Pi$, we get that the probability of $x$ being a false positive in $\Pi^{\prime}$ is at most $\epsilon$. Therefore, the probability of $\mathcal{A}_{\dagger}$ winning the \textsc{Smash-Bloom} game is $\text{Pr}[\textsc{Smash-Bloom}(\mathcal{A}_{\dagger}, t, \lambda) = \top] \leq \varepsilon + \mathrm{negl}(\lambda)$.
\end{proof}

We can now prove our main result regarding the security of LSM stores.

\begin{theorem}\label{thm:secure_lsm}
    Let $\Lambda = (C_{r}, I_{r}, Q)$ be an LSM store using $m$ bits of memory for its Bloom Filters. If pseudo-random permutations exist, then there exists a negligible function $\mathrm{negl(\cdot)}$ such that for security parameter $\lambda$, there exists an LSM engine $\Lambda^{\prime}$ that is $(n, t, \epsilon + \mathrm{negl}(\lambda))$-secure against Bloom Filter targeting adversaries $\mathbb{A}_{\text{BF}}$ using ${m}^{'} = m + \lambda$ bits of memory for its Bloom Filters.
\end{theorem}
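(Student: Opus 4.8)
The plan is to mirror the construction and argument of Theorem~\ref{thm:secure_bloom_filter}, lifting it from a single Bloom Filter to the entire collection of per-run Bloom Filters inside the LSM store. First I would specify the resilient engine $\Lambda^{\prime} = (C^{\prime}_{r}, I^{\prime}_{r}, Q^{\prime})$ concretely: fix a PRP key $\kappa \sampledfrom \{0,1\}^{\lambda}$ stored alongside the public parameters, leave construction unchanged ($C^{\prime}_{r}(\Phi) = C_{r}(\Phi)$), and replace every key by its permutation before it touches the underlying store, i.e. $I^{\prime}_{r}((k,v), \sigma) = I_{r}((F_{\kappa}(k), v), \sigma)$ and $Q^{\prime}(k, \sigma) = Q(F_{\kappa}(k), \sigma)$. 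The only extra memory is the $\lambda$-bit key $\kappa$, giving $m^{\prime} = m + \lambda$. Correctness is immediate from bijectivity of $F_{\kappa}$: distinct keys map to distinct permuted keys so no spurious collisions are introduced, and a query for $k$ always probes the same permuted key $F_{\kappa}(k)$ under which the value was stored, so the retrieved value is unchanged.

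The central idea for the security reduction is to collapse the LSM store's many Bloom Filters into a single \emph{composite} Bloom Filter $\Pi_{\Lambda}$ and then invoke Theorem~\ref{thm:secure_bloom_filter}. I would define $\Pi_{\Lambda}$ so that its query returns $\top$ exactly when some per-run filter returns $\top$ (the OR semantics of $\mathscr{O}_{Q}$) and its internal state is the tuple of all per-run states (exactly what $\mathscr{O}_{R}$ exposes). Under this identification, a run of $\textsc{Smash-Lsm}(\mathcal{A}, t, \lambda)$ against $\Lambda^{\prime}$ is syntactically a run of $\textsc{Smash-Bloom}$ against the PRP-wrapped composite filter, where the static false-positive parameter $\epsilon$ of $\Pi_{\Lambda}$ is precisely the probability (over the coins of $C_{r}$) that a fresh random domain point is flagged by at least one of the filters encoding the $n$ inserted keys. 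Applying Theorem~\ref{thm:secure_bloom_filter} to $\Pi_{\Lambda}$ then yields that the wrapped composite filter, and hence $\Lambda^{\prime}$, is $(n, t, \epsilon + \mathrm{negl}(\lambda))$-secure.

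To make the reduction explicit I would, assuming toward a contradiction that some n.u.\ p.p.t.\ $\mathcal{A} \in \mathbb{A}_{\text{BF}}$ wins $\textsc{Smash-Lsm}$ against $\Lambda^{\prime}$ with probability $\epsilon + \delta(\lambda)$ for non-negligible $\delta$, build a distinguisher $D$ given oracle access to either $F_{\kappa}$ or a truly random permutation $f$. $D$ runs $\mathcal{A}_{C}$ to obtain $\mathcal{L}$, initializes the underlying store and inserts each $(F(k), v)$ by calling its permutation oracle, answers $\mathscr{O}_{Q}(k_{i})$ by querying the oracle on $k_{i}$ and running the store's query across all filters, answers $\mathscr{O}_{R}$ with the current filter states, obtains $k_{\mathcal{A}}$ from $\mathcal{A}_{Q}$, and outputs $1$ iff $k_{\mathcal{A}}$ is fresh and flagged. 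Then $\Pr[D^{F_{\kappa}} = 1]$ equals $\mathcal{A}$'s real winning probability, while $\Pr[D^{f} = 1] \leq \epsilon$: since the winning condition forces $k_{\mathcal{A}} \notin \mathcal{L}_{k} \cup \{k_{1}, \dots, k_{t}\}$, the value $f(k_{\mathcal{A}})$ was never used to build or probe any filter and is therefore a fresh uniform point independent of the adversary's entire view, so its chance of being flagged is at most the static $\epsilon$. The advantage of $D$ is thus $\geq \delta(\lambda)$, contradicting PRP indistinguishability.

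I expect the main obstacle to be the accounting in the idealized world rather than the reduction mechanics: one must argue carefully that exposing every per-run internal state through $\mathscr{O}_{R}$ and allowing $t$ adaptive $\mathscr{O}_{Q}$ queries still leaves $f(k_{\mathcal{A}})$ independent and uniform, so that the composite soundness bound $\epsilon$ --- a union-bound-type quantity over possibly many runs --- legitimately upper-bounds the idealized win probability. The freshness constraint $k_{\mathcal{A}} \notin \mathcal{L}_{k} \cup \{k_{1},\dots,k_{t}\}$ in Step 4 of the game is exactly what guarantees this independence, and pinning down that the LSM store's distribution of permuted keys across runs does not inflate $\epsilon$ beyond its static value is the one place where the multi-filter structure, absent from Theorem~\ref{thm:secure_bloom_filter}, genuinely has to be handled.
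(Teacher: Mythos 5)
Your proposal is correct, and it follows the same skeleton as the paper's proof: the identical construction (fix $\kappa \sampledfrom \{0,1\}^{\lambda}$, permute keys on insertion and query, charge $\lambda$ extra bits for $\kappa$ — note the paper's own write-up has a typo here, writing $I^{\prime}_{r}((k,v),\sigma) = C_{r}((F_{\kappa}(k),v),\sigma)$ where it plainly means $I_{r}$, which you state correctly), the same bijectivity argument for correctness, and a security argument that ultimately rests on Theorem~\ref{thm:secure_bloom_filter} / PRP indistinguishability. Where you genuinely diverge is in how the reduction is carried out. The paper argues by contradiction in two sentences: a \textsc{Smash-Lsm} winner "by definition of oracle $\mathscr{O}_{Q}$" yields a \textsc{Smash-Bloom} winner against some per-run filter $\Pi^{\prime}$, contradicting Theorem~\ref{thm:secure_bloom_filter}; it never says how the many per-run filters, each with its own soundness parameter, are collapsed into the single filter that \textsc{Smash-Bloom} is played against, nor how the per-filter $\epsilon$ relates to the $\epsilon$ in the LSM-level claim. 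Your composite-filter formalization (OR of all per-run queries, tuple of all per-run states, with $\epsilon$ defined as the composite soundness) pins down exactly that identification, so that \textsc{Smash-Lsm} against $\Lambda^{\prime}$ \emph{is} \textsc{Smash-Bloom} against the PRP-wrapped composite filter, and your explicit distinguisher $D$ essentially inlines the hybrid argument of Theorem~\ref{thm:secure_bloom_filter} rather than citing it as a black box. What each approach buys: the paper's version is shorter and modular, but it silently assumes the multi-filter accounting works out; your version closes that gap — which, as you note, is the only place where the LSM structure (as opposed to a single Bloom Filter) genuinely enters — at the cost of re-deriving the deferred-sampling/freshness argument, where the residual subtlety (that $f(k_{\mathcal{A}})$ is uniform only on the domain minus the used values, a negligible correction for exponentially large domains) is glossed at the same level of rigor in both your write-up and the paper's.
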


\begin{proof}

We first demonstrate a secure construction. We then prove its security and correctness.\vspace{0.2em}

\noindent \textit{Construction}: Choose a key $\kappa \in \{0, 1\}^{\lambda}$ for a pseudorandom permutation $F_{\kappa}$. Let $\Lambda^{\prime} = (C_{r}, I^{\prime}_{r}, Q^{\prime})$ where

\begin{enumerate}
    \item $I^{\prime}_{r}((k, v), \sigma) = C_{r}((F_{\kappa}(k), v), \sigma)$
    \item $Q^{\prime}(x, \sigma) = Q(F_{\kappa}(x), \sigma)$.
\end{enumerate}

\noindent The only additional memory required is for storing $\kappa$ which is $\lambda$ bits long.

\textit{Security Proof}: We prove this by contradiction. Suppose there does exist an n.u. p.p.t. adversary $\mathcal{A}$ that can with the \textsc{Smash-Lsm} game with probability greater than $\varepsilon + \textrm{negl}(\lambda)$. Then by definition of oracle $\mathscr{O}_{Q}$, there exists a Bloom Filter $\Pi^{\prime}$ with the construction from Thm~\ref{thm:secure_bloom_filter} for which $\mathcal{A}$ can generate false positives with probability higher than $\varepsilon + \mathrm{negl}(\lambda)$. So $\mathcal{A}$ can also win the \textsc{Smash-Bloom} game with a probability higher than $\varepsilon + \mathrm{negl}(\lambda)$. This is a contradiction as Thm~\ref{thm:secure_bloom_filter} proves that no n.u p.p.t adversary can win the \textsc{Smash-Bloom} game against Bloom Filter constructions of type $\Pi^{\prime}$ with probability higher than $\varepsilon + \mathrm{negl}(\lambda)$.

\textit{Correctness Proof}: The correctness of our construction follows from the fact that $F_{\kappa}$ is a bijection so it does not affect the correctness of the internal Bloom Filters, fence pointers, or the algorithms called on the LSM tree. 
\end{proof}

\noindent Since only the keys are permuted, not the values stored in the LSM store, we only need to do a forward permutation $F_{\kappa}$ but not an inverse permutation $F_{\kappa}^{-1}$ for a point query.

\section{Related Work} 

We discuss two areas of research relevant to our work: adversarial correctness of probabilistic data structures, and LSM store benchmarking and optimization.

\subsubsection*{Adversarial Data Structures.} There have been many recent efforts to rigorously define a game-based~\cite{naor_eylon,naor_oved,hayder_2024,clayton_2019} and simulator-based~\cite{filic_2022,filic_2024,hayder_2025} adversarial model for probabilistic data structures such as the Bloom Filter and the Learned Bloom Filter. In particular, the work of Naor et. al.~\cite{naor_eylon,naor_oved} was the first to show provably secure constructions for the Bloom Filter. There is also prior work showing feasible attacks on probabilistic data structures including the Bloom Filter~\cite{gerbet_2015} and the Learned Bloom Filter~\cite{reviriego_2021}. 
Our work builds upon these insights by applying them to LSM stores and designing countermeasures tailored to real-world storage systems. 

\subsubsection*{LSM Store Benchmarking / Optimization.} KVBench~\cite{zhu_et_al_2024} is one of many works that focus on benchmarking workloads for LSM stores. Prior work on LSM store optimizations includes ~\cite{monkey,thakkar_2024,spooky,huynh_et_al_2022}. Monkey~\cite{monkey} focuses on optimizing the memory budget allocation of the Bloom Filters used in LSM stores for better query performance. Huynch et. al.~\cite{huynh_et_al_2022} use Large Language Models (LLMs) to tune the design knobs of LSM stores. All prior work discussed primarily focuses on benchmarking and improving performance under non-adversarial workloads. To the best of our knowledge, this is the first paper to rigorously define an adversarial model for LSM stores and propose concrete, provably secure LSM store constructions.

\section{Open Problems}\label{sec:open_problems}

We leave the community with open problems in four categories that emerge from our work. 

\subsection{Adversarial Targets.} The experiments in our work focus on adversaries that target Bloom Filters in LSM stores via well-chosen insertions. We conduct our experiments on, LevelDB and RocksDB, both of which use a merge policy called leveling~\cite{monkey,huynh_et_al_2022}. There is a vast universe of adversarial targets we leave as open problems that emerge from our work. Empirical evidence showing performance degradation from these targets would help greatly in the design of future LSM stores.

\subsubsection*{Deleted Insertions} A direct follow-up for our attacks comes from the observation that Bloom Filters in LSM trees cannot perform deletions. This means that if an adversary $\mathcal{A}$ sabotages Bloom Filters via well-chosen insertions and then deletes the keys it inserted, the false positive rate of the Bloom Filters remains high unless the Bloom Filter is reconstructed from existing keys.

\subsubsection*{Range Queries} Range queries~\cite{monkey,shubham_subhadeep_2024} look for a larger number of keys in multiple levels of an LSM tree compared to point queries. Therefore, they are potentially more vulnerable to adversarial workloads. Adversary $\mathcal{A}$ can insert data in a manner that forces range queries to access an excessive number of runs, increasing read latency. Studying adversarial range query complexity and designing more efficient prefetching and merging strategies could help mitigate these attacks.

\subsubsection*{Merge Policy} There are different merge policies within LSM stores including leveling and tiering~\cite{monkey}. Prior work~\cite{huynh_et_al_2022} has shown that leveling is more robust to uncertain (but not explicitly adversarial) workloads than tiering. We conducted our evaluations on LevelDB and RocksDB, both of which use leveling. We might potentially see higher performance degradation on LSM stores that rely on tiering.

\subsection{General Adversary-resilience} In our work, we have shown a construction that provides adversary-resilience in an LSM store against $\mathbb{A}_{\text{BF}}$, the set of computationally bound Bloom Filter targeting adversaries. We leave answering the following follow-up questions as open problems:

\begin{enumerate}
    \item Does our construction provide adversary resilience against other classes of adversaries?
    \item Is there a construction, perhaps using our simulator-based model ($Appendix~\ref{sec:simulator_based_adversarial_model}$) guaranteeing reasonable adversary resilience against all computationally bound adversaries?
    \item Is a construction possible (either only for $\mathbb{A}_{\text{BF}}$ or for the general set $\mathbb{A}$) that provides adversary-resilience if the adversary is computationally-unbounded?
\end{enumerate}

\noindent Helpful starting points for this work include~\cite{naor_eylon,naor_oved} who provide an adversary-resilient construction of the Bloom Filter against computationally unbounded adversaries, and the simulator-based security constructions of~\cite{filic_2022,filic_2024}.

\subsection{Learned Bloom Filters.} We conjecture that replacing Bloom Filters used by an LSM store with Learned Bloom Filters~\cite{hayder_2024} with a better false positive rate may lead to better performance. Adversary resilience for Learned Bloom Filters can be solved using the construction of~\cite{hayder_2024}. We leave experimental validation of this as an open problem. Similar research for Adaptive Bloom Filters (or Learned Adaptive Bloom Filters~\cite{dai_shrivastava_2020}) will also be interesting. A helpful starting point for this work is~\cite{hayder_2024} who provide an adversary-resilient construction of the Learned Bloom Filter, called the Downtown Bodega Filter, in the same game-based setting as~\cite{naor_eylon}.

\subsection{Existence of an Ideal-World Simulator} In our simulator-based adversarial model (Appendix~\ref{sec:simulator_based_adversarial_model}), we have not proved the existence of an ideal-world simulator. We leave proving the existence of and providing a construction for an ideal-world simulator for LSM stores as an open problem. The constructions of an ideal-world simulator for probabilistic data structures by~\cite{filic_2022,filic_2024} may be a good starting point. However, their constructions require two properties in a data structure: function decomposability and reinsertion variance. Informally, reinsertion invariance requires the internal state of a probabilistic data structure to remain unchanged when the same key is reinserted. This does not necessarily apply to LSM stores.

\section{Conclusion}

In this paper, we investigate the performance of LSM stores under adversarial workloads. Our analysis shows that adversaries can significantly increase zero-result lookup latencies. We introduce a lightweight, provably secure mitigation strategy based on keyed pseudorandom permutations. Our implementation in LevelDB and RocksDB shows that this approach effectively reduces adversarial impact while maintaining overall system performance. Our work demonstrates the importance of adversarial resilience in storage systems. We introduce the community to several important open problems calling for both theoretical and experimental research into broader classes of attacks and secure designs for LSM stores.


\bibliographystyle{ACM-Reference-Format}
\bibliography{sample}

\appendix

\section{Simulator-based Model}\label{sec:simulator_based_adversarial_model}

In this section, we define a simulator-based~\cite{lindell_2016} adversarial model for LSM stores inspired by the simulator-based model of~\cite{filic_2022} for Bloom Filters. The game-based adversarial model of Section~\ref{sec:adversarial_model} is restricted to adversaries that target an LSM store's Bloom Filters. The simulator-based adversarial model, on the other hand, applies to any class of adversaries. We reuse the notation defined in Section~\ref{sec:preliminaries}.\vspace{0.2em}

\noindent \textbf{Adversarial Setting.} Let $\Lambda = (C_{r}, I_{r}, Q)$ be an LSM store, with public parameters $\Phi$. Let $\mathbb{A}$ be any given set of adversaries playing Game~\ref{game:real-or-ideal} that are bound by time $t_{\mathcal{\mathbb{A}}}$ and are allowed at most $\psi{i}, \psi{q}, \psi_{r}$ queries to oracles $\mathscr{O}_{I},\mathscr{O}_{Q},\mathscr{O}_{R}$ respectively. Let $\mathbb{D}$ be the set of all distinguishers bound by time $t_{\mathbb{D}}$. To establish any results regarding the behavior of LSM stores in the presence of adversaries, we need to compare it to what the behavior of an LSM store without the presence of an adversary is expected to be. We call any simulator that provides a view of the LSM store without adversarial interference an \textit{ideal-world} simulator. Let $\mathcal{S}$ be an ideal-world simulator bound by time $t_{\mathcal{S}}$ that provides a \emph{non-adversarially-influenced view} of $\Lambda$ to adversaries in set $\mathbb{A}$. For simpler notation, we denote $\Psi = (\psi_{i}, \psi_{q}, \psi_{r})$ and $T = (t_{\mathbb{A}}, t_{\mathcal{S}}, t_{\mathbb{D}})$.

\begin{definition}[Adversary Resilient LSM store]\,\newline
We say that $\Lambda$ is an $(\Psi, T, \varepsilon)$-resilient LSM store if for all adversaries $\mathcal{A} \in \mathbb{A}$ and for all distinguishers $\mathscr{D} \in \mathbb{D}$ in Game~\ref{game:real-or-ideal}, it holds that: 
\begin{align*}
	|\text{Pr}[{\text{Real}(\mathcal{A}, \mathscr{D}){=}1}]{-}\text{Pr}[{\text{Ideal}(\mathcal{A}, \mathscr{D}, \mathcal{S}){=}1}]| {\le} \varepsilon.
\end{align*}

\noindent where the constructions $\Lambda, \mathbb{A}, \mathbb{D}, \Psi$, and $T$ are defined in the Adversarial Setting section above. The probabilities are taken over the random coins used by $I_{r}$ and $_C{r}$ within $\mathscr{O}_{I}$ and $\mathscr{O}_{C}$, as well as any random coins used by $\mathcal{A}, \mathcal{S}$, or $\mathscr{D}$.
\end{definition}

\begin{game}\label{game:real-or-ideal}
We have an adversary $\mathcal{A}$, a simulator $\mathcal{S}$, a distinguisher $\mathscr{D}$, and public parameters $\Phi$. 

\begin{pchstack}[boxed]
    \begin{pcvstack}[]
    \procedure[linenumbering,headlinecmd={\vspace{.1em}\hrule\vspace{.3em}}]{Real-Or-Ideal($\mathcal{A}, \mathcal{S}, \mathscr{D}, \Phi$)}{
    d \sampledfrom \{0, 1\} \\ 
    \pcif d = 0 \t \pcmycomment{Real} \\
    \ \ \sigma \sampledfrom C_{r}(\Phi)\\
    \ \ y \sampledfrom \mathcal{A}^{\mathscr{O}_{I}, \mathscr{O}_{Q}, \mathscr{O}_{R}} \\ 
    \pcelse \t \pcmycomment{Ideal} \\
    \ \ y \sampledfrom \mathcal{S}(\mathcal{A}, \Phi) \\
    \pcreturn d^{\prime} \sampledfrom \mathscr{D}(y) 
    }
    \end{pcvstack}
    \hspace{5em}
    \begin{pcvstack}
        \procedure[headlinecmd={\vspace{.1em}\hrule\vspace{.3em}}]{Oracle $\mathscr{O}_{I}(k, v)$}{%
            \sigma \sampledfrom I_{r}(k, v, \sigma)
        }
        \vspace{1.2em}
        \procedure[headlinecmd={\vspace{.1em}\hrule\vspace{.3em}}]{Oracle $\mathscr{O}_{Q}(k)$}{%
            \pcreturn Q(k, \sigma)
        }
        \vspace{1.2em}
        \procedure[space=1em,headlinecmd={\vspace{.1em}\hrule\vspace{.3em}}]{Oracle $\mathscr{O}_{R}()$}{%
            \pcreturn \sigma
        }
    \end{pcvstack}
\end{pchstack}

\end{game}

\end{document}
\endinput